\documentclass[conference]{IEEEtran}
\IEEEoverridecommandlockouts
\usepackage{cite}
\usepackage{amsmath,amssymb,amsfonts}
\usepackage{amsthm}
\usepackage{algorithmic}
\usepackage{graphicx}
\usepackage{textcomp}
\usepackage{xcolor}
\def\BibTeX{{\rm B\kern-.05em{\sc i\kern-.025em b}\kern-.08em
    T\kern-.1667em\lower.7ex\hbox{E}\kern-.125emX}}
\usepackage{graphicx}
\usepackage{bm}

\theoremstyle{plain}

\newtheorem{theorem}{Theorem}
\newtheorem{proposition}{Proposition}
\newtheorem{corollary}{Corollary}
\newtheorem{definition}{Definition}

\newtheorem{claim}{Claim}

\newcommand{\mO}{{O}}
\newcommand{\bp}{\bm{p}}

\newcommand{\cw}{{\mathtt{cw}}}

\newcommand{\nd}{{\mathtt{nd}}}

\newcommand{\mw}{{\mathtt{mw}}}

\newcommand{\dist}{\mathrm{dist}}

\begin{document}

\title{Solving Distance-constrained Labeling Problems \\for Small Diameter Graphs via TSP\textsuperscript{*}
\thanks{\textsuperscript{*}This work is partially supported by JSPS KAKENHI Grant Number JP20H05967, JP21H05852, JP21K19765, JP21K17707, JP22H00513.}
}

\author{\IEEEauthorblockN{Tesshu Hanaka}
\IEEEauthorblockA{\textit{Department of Informatics} \\
\textit{Kyushu University}\\
Fukuoka, Japan \\
{hanaka@inf.kyushu-u.ac.jp}}
\and
\IEEEauthorblockN{Hirotaka Ono}
\IEEEauthorblockA{\textit{Department of Mathematical Informatics} \\
\textit{Nagoya University}\\
Nagoya, Japan \\
ono@nagoya-u.jp}
\and
\IEEEauthorblockN{Kosuke Sugiyama}
\IEEEauthorblockA{\textit{Department of Mathematical Informatics} \\
\textit{Nagoya University}\\
Nagoya, Japan \\
sugiyama.kousuke.k3@s.mail.nagoya-u.ac.jp}}

\maketitle

\begin{abstract}
For an undirected graph $G=(V,E)$ and a $k$-non-negative integer vector $\bp=(p_1,\ldots,p_k)$, a mapping $l\colon V\to \mathbb{N}\cup \{0\}$ is called an $L(\bp)$-labeling of $G$ if $\left| l(u)-l(v) \right|\geq p_d$ for any two distinct vertices $u,v\in V$ with distance $d$, and the maximum value of $\{l(v)\mid v\in V\}$ is called the span of $l$. 
Originally, $L(\bp)$-labeling of $G$ for $\bp=(2,1)$ is introduced in the context of frequency assignment in radio networks, where ‘close’ transmitters must receive different frequencies and ‘very close’ transmitters must receive frequencies that are at least two frequencies apart so that
they can avoid interference.
\textsc{$L(\bp)$-Labeling} is the problem of finding the minimum span $\lambda_{\bp}$ among $L(\bp)$-labelings of $G$, which is NP-hard for every non-zero $\bp$. 
\textsc{$L(\bp)$-Labeling} is well studied for specific $\bp$'s; in particular, many (exact or approximation) algorithms for general graphs or restricted classes of graphs are proposed for $\bp=(2,1)$ or more generally $\bp=(p,q)$. Unfortunately, most algorithms strongly depend on the values of $\bp$, and it is not apparent to extend algorithms for $\bp$ to ones for another $\bp'$ in general. 
In this paper, we give a simple polynomial-time reduction of \textsc{$L(\bp)$-Labeling} on graphs with a small diameter to \textsc{Metric (Path) TSP}, which enables us to use numerous results on \textsc{(Metric) TSP}. 
On the practical side, we can utilize various high-performance heuristics for TSP, such as Concordo and LKH, to solve our problem. On the theoretical side, we can see that the problem for any $\bp$ under this framework is 1.5-approximable, 
and it can be solved by the Held-Karp algorithm in $O(2^n n^2)$ time, where $n$ is the number of vertices, and so on. 
\end{abstract}

\begin{IEEEkeywords}
Frequency Assignment, Distance-constrained Labeling, $L(p_1,\ldots,p_k)$-Labeling, TSP, Graph Diameter, Parameterized Complexity
\end{IEEEkeywords}

\section{Introduction}
For an undirected graph $G$ with $n$ vertices and $m$ edges, and a $k$-nonnegative integer vector $\bp=(p_1,\ldots,p_k)$, a mapping $l\colon V\to \mathbb{N}\cup \{0\}$ is an $L(\bp)$-labeling of $G$ if $\left| l(u)-l(v) \right|\geq p_d$ for any two distinct vertices $u,v\in V$ with distance $d$, 
the maximum value of $\{l(v)\mid v\in V\}$ is called the span of $l$. 
The notion of $L(\bp)$-labeling for $\bp=(2,1)$ can be seen in Hale~\cite{H80} and Roberts~\cite{R91}
in the context of frequency assignment in radio networks, where ‘close’ transmitters must receive different frequencies
and ‘very close’ transmitters must receive frequencies that are at least two frequencies apart so that
they can avoid interference.
\textsc{$L(\bp)$-Labeling} is the problem of finding the minimum span $\lambda_{\bp}$ among $L(\bp)$-labelings of $G$, which is NP-hard for every non-zero $\bp$. 
Since \textsc{$L(\bp)$-Labeling} for $k=1$ is the ordinary coloring problem, the cases of $k\ge 2$ are essential to study \textsc{$L(\bp)$-Labeling} under its name. In particular, the problem for $\bp=(p_1,p_2)=(p,q)$ is called the \textsc{$L(p,q)$-Labeling} problem and intensively and extensively studied. 

Among infinite settings of $(p,q)$, probably \textsc{$L(2,1)$-Labeling} is most studied. It is shown that \textsc{$L(2,1)$-Labeling} is NP-hard even for restricted 
classes of graphs, such as planar graphs, bipartite graphs, chordal graphs~\cite{BKTL04}, graphs with diameter $2$~\cite{griggs1992labelling}, and graphs of tree-width 2~\cite{fiala2005distance}. In contrast, only a few graph classes are known to be solvable in polynomial time. For example, \textsc{$L(2,1)$-Labeling} can be solved in polynomial time for paths, cycles, wheels~\cite{griggs1992labelling}, co-graphs, and trees~\cite{Chang1996,Hasunuma2013:tree}. These algorithms are straightforward (paths, cycles, wheels) or strongly depend on the properties of graphs (co-graphs and trees). In fact, the NP-hardness for graphs of tree-width 2 implies that the polynomial-time solvability for trees (graphs of tree-width 1) depends on not a tree-like structure but the tree structure itself; it might be difficult to extend or generalize algorithms for trees to superclasses of trees. Note that the algorithm of \cite{Hasunuma2013:tree} for trees is quite involved though its running time is linear. Furthermore, \textsc{$L(p,q)$-Labeling} is NP-hard even for trees, if $p$ and $q$ do not have a common divisor.  

Another direction of research for intractable problems is to design exact exponential-time algorithms whose bases or exponents are small. For example, Junosza-Szaniawski et al.~\cite{junosza2013fast} present an algorithm for \textsc{$L(2,1)$-Labeling} whose running time is $\mO(2.6488^n)$, which is currently the fastest. This algorithm uses the exponential size of memories. 
The current fastest exact algorithm with polynomial space for
\textsc{$L(2,1)$-Labeling} is proposed by Junosza-Szaniawski et al.~\cite{junosza2013determining}, and it runs in $\mO(7.4922^n)$ time. These algorithms are specialized in \textsc{$L(2,1)$-Labeling}. 
As more generalized algorithms, Cygan and Kowalik presented an exact algorithm for a more general labeling problem, called \emph{channel assignment problem}. It is based on the fast zeta transform in combination with the inclusion-exclusion principle~\cite{cygan2011channel}. The algorithm solves \textsc{$L(p,q)$-Labeling} in $\mO^*((\max\{p,q\}+1)^n)$ time and \textsc{$L(2,1)$-Labeling} in $\mO^*(3^n)$ time, where polynomial factors are omitted in $\mO^*$ notation.  

In summary, \textsc{$L(\bp)$-Labeling} is well studied in the fields of algorithm design, but most of the developed algorithms are tailored to $\bp$ and graph classes, and it is hard to generalize them.   

\subsection{Our contribution}
In this paper, we address the \textsc{$L(\bp)$-Labeling} problem on graphs with a small diameter, which is known to be NP-hard. Our approach is simple; we just solve the problem via TSP. Namely, our main contribution is an $O(nm)$-time reduction from \textsc{$L(\bp)$-Labeling} for graph $G$ with diameter at most the dimension of $\bp$, say $k$,  
to \textsc{Metric Path Traveling Salesman Problem (TSP)} under the assumption that $p_{\max}\le 2p_{\min}$, where $p_{\min}=\min \{p_1,\ldots, p_k\}$ and $p_{\max} = \max\{p_1,\ldots,p_k\}$. 
%
%
Note that the most well-studied setting $\bp=(2,1)$ satisfies this condition. Although this reduction is available only for graphs with a small diameter and $\bp$ satisfying the above condition, it enables us to use numerous results of \textsc{(Metric) TSP}. 

On the practical side, since many practical algorithms for (Metric) TSP have been developed, they can be applied to solve \textsc{$L(\bp)$-Labeling} for graphs with a small diameter with a minor modification. For example, the Lin-Kernighan heuristic for symmetric TSP~\cite{lin1973effective} and its variants are known to have outstanding performance, and there are several excellent implementations~\cite{Concorde,helsgaun2000effective}. Such implementations can be used to solve our problems as engines practically.  

On the theoretical side, the reduction leads to several algorithms with performance guarantees, such as an $O(2^n n^2)$-time algorithm and a 1.5-approximation algorithm for \textsc{$L(\bp)$-Labeling} if the diameter of an input graph is at most $k$ and if $p_{\max}\leq 2p_{\min}$. Both of the results imply that a small diameter and the setting $\bp$ may make the problem easier;  it is only known that \textsc{$L(p,q)$-Labeling} for general graphs can be solved in $O^*((\max\{p,q\}+1)^n)$ time and be $O(\min\{\Delta,\sqrt{n} + p/q\})$-approximable in polynomial time, where $\Delta$ is maximum degree. Particularly, in case of $k=2$, our reduction reduces the problem (i.e, \textsc{$L(p,q)$-Labeling}) to Path TSP with 2-valued edge weights, which can be solved via \textsc{Partition into Paths}. Since \textsc{Partition into Path} is known to be fixed-parameter tractable for modular-width~\cite{gajarsky2013parameterized}, so is our problem. 
On the other hand, we point out that \textsc{$L(p,q)$-Labeling} for graphs with diameter 2 is W[1]-hard for clique-width, which could show a frontier between fixed parameter (in)tractability. 

In passing, we can show that \textsc{$L(1,\ldots,1)$-Labeling} on general graphs is fixed-parameter tractable for modular-width. Although the parameterized complexity of \textsc{$L(p,q)$-Labeling} for modular-width remains open in general, \textsc{$L(\bp)$-Labeling} becomes $p_{\max}$-approximable in FPT time for modular-width by the FPT result for \textsc{$L(1,\ldots,1)$-Labeling}.


\subsection{Related work}
\subsubsection{Distance-Constrained labeling}
The original notion of distance-constrained labeling can be seen in Hale \cite{H80} and Roberts \cite{R91}  in the context of frequency assignment. In frequency assignment, `close' transmitters must receive different frequencies, and `very close' transmitters must receive frequencies that are at least two frequencies apart to avoid interference.  
Then, Griggs and Yeh formally introduced the notion of $L(p,q)$-labeling in~\cite{griggs1992labelling}.  
Since $p$ and $q$ could be any natural numbers, there are infinite settings of $L(p,q)$-labeling, but $L(2,1)$-labeling is most studied. One of the reasons is the context of more general frequency assignment because the setting explained above is interpreted as $L(2,1)$-labeling. In the context of frequency assignment, it is natural to consider the setting of $p\ge q$. Also, $q=1$ might be natural because it decides the unit. Another reason why $L(2,1)$ is most popular is that the setting of $p=2$ and $q=1$ seems the most natural and fundamental among the settings represented by $L(p,q)$-labeling. Indeed, $L(1,1)$-labeling of $G$ is equivalent to the ordinary coloring on the square of $G$; we do not need to study $L(1,1)$-labeling itself in this name.  

As introduced in the previous sections, the \textsc{$L(p,q)$-labeling} problem or specifically the \textsc{$L(2,1)$-labeling} problem is NP-hard even for restricted classes of graphs. Thus polynomial-time algorithms for particular classes of graphs and exact exponential-time algorithms are developed. We list here other results than those mentioned before. As for approximation, \textsc{$L(p,q)$-Labeling} is NP-hard to approximate within factor better than $n^{\frac{1}{2}-\varepsilon}$.  On the other hand, there is an asymptotically tight $O(\min\{\Delta,\sqrt{n} + p/q\})$-approximation algorithm where $\Delta$ is the maximum degree of $G$~\cite{halldorsson2006approximating}.

For the parameterized complexity, the \textsc{$L(2,1)$-Labeling} problem is fixed-parameter tractable for vertex cover number~\cite{Fiala2011}, clique-width plus maximum degree, or twin cover number plus maximum clique size~\cite{Hanaka2022}.
Although it is less critical to study \textsc{$L(1,\dots,1)$-Labeling} (we write \textsc{$L(\mathbf{1})$-Labeling} hereafter) in this name, $L(\mathbf{1})$-Labeling can be used for approximating \textsc{$L(\bp)$-Labeling};
$L(\mathbf{1})$-Labeling yields $p_{\max}$-approximation of $L(\bp)$-Labeling, $p_{\max} = \max_{d\in [k]} p_d$. 
For this reason, we are interested in the complexity of 
\textsc{$L(\mathbf{1})$-Labeling} or Coloring of powers of graphs. It is known that \textsc{$L(1,1)$-Labeling} is W[1]-hard for the tree-width~\cite{Fiala2011}, even though the ordinary Coloring is FPT, but \textsc{$L(\mathbf{1})$-Labeling} is in XP for clique-width~\cite{SUCHAN20071885}, which implies that it is in XP for tree-width. 
Hanaka et al. also show that \textsc{$L(1,1)$-Labeling} is fixed-parameter tractable when parameterized by twin cover number \cite{Hanaka2022}.

The generalized setting, $L(\bp)$, is also studied but is less popular. Bertossi and Pinotti present approximation algorithms of \textsc{$L(\bp)$-Labeling} for trees and interval graphs \cite{bertossi2007approximate}. \textsc{$L(\bp)$-Labeling} is fixed-parameter tractable for the neighborhood diversity, $p_{\max}$, plus $k$~\cite{fiala2018parameterized}.
Further related work for \textsc{$L(\bp)$-Labeling} can be found in  the following surveys~\cite{Calamoneri2011:survey,Hasunuma2014:survey}.

\subsubsection{\textsc{(Metric Path) TSP}}
\textsc{Traveling Salesman Problem (TSP)} might be the most studied combinatorial optimization problem from both practical and theoretical points of view. Thus, we here list only a few of the results. 
 
On the practical side, an enormous number of works have been devoted to developing efficient algorithms for TSP for a long time. For example, as mentioned before, implementations of the Lin-Kernighan type algorithms~\cite{lin1973effective} have outstanding performance, and it was reported even in 2003~\cite{applegate2003chained} that an implementation of the chained Lin-Kernighan can constantly find near-optimal solutions for instances with 100,000 cities or more. Moreover, some implementations, such as Concorde and LKH~\cite{Concorde,LKH}, are available on the Web. Developments are continuing, and improvements are still reported~\cite{Tinos2018,TAILLARD2019420}. 

On the theoretical side, TSP has been studied from various aspects. For example, the Held-Karp algorithm with time complexity $O^*(2^n)$ was proposed in 1962~\cite{Bellman1962:TSP:DP,held1962dynamic}, and the existence of an exact algorithm with time complexity $O^*(c^n)$ for some $c < 2$ is still open~\cite{Woeginger2003}. 
For approximation, the general symmetric TSP has no approximation algorithm unless P$=$NP, whereas the \textsc{Metric TSP}, which is a restricted version of TSP whose edge-weights satisfy the triangle inequality, is known to be 1.5-approximable by the Christofides algorithm~\cite{Christofides1976:TSP}. 
Recently, this bound has been slightly improved by a randomized algorithm whose approximation ratio is at most $1.5-10^{-36}$~\cite{Karlin2021STOCTSP}.
Note that our reduction is not to \textsc{Metric TSP} but to \textsc{Metric Path TSP}.  Naive applications of algorithms for \textsc{Metric TSP} to \textsc{Metric Path TSP} do not preserve approximation guarantees, though it is shown that $\alpha$-approximation algorithm for \textsc{TSP} can be used to obtain an $(\alpha+\varepsilon)$-approximation solution of \textsc{Path TSP} for arbitrary $\varepsilon>0$~\cite{Traub2022SICOMP}. For \textsc{Metric Path TSP}, Zenklusen recently gives a deterministic $1.5$-approximation algorithm~\cite{zenklusen20191}. 
By combining the results on \cite{Karlin2021STOCTSP} and \cite{Traub2022SICOMP}, a randomized algorithm can obtain an approximate solution whose ratio is slightly better than 1.5.


\section{Preliminaries}
\subsection{Notations}
Let $G=(V,E)$ be an undirected and connected graph where $n=|V|$ and $m=|E|$.
The \emph{distance} between two vertices $u,v$ in $G$ is denoted by $\mathrm{dist}_{G}(u,v)$. The \emph{diameter} of $G$ is defined by $\mathrm{diam}\left(G\right)=\max_{u,v\in V}  \mathrm{dist}_{G}(u,v)$.
For a vertex $v\in V$, we denote by $N_G(v) = \left\{ u\in V\  \middle| \  \{u,v\}\in E  \right\}$ the set of adjacent vertices of $v$ in $G$.
For a vertex subset $S\subseteq V$, $G[S]$ is defined as the subgraph induced by $S$.
The complement graph of $G$ is denoted by $\overline{G}$. Also, the $k$-th power of graph $G$ is denoted by $G^k$.
Given a positive integer $k$, we define $[k]=\{ 1,2,\ldots,k \}$.
For an integer vector $\bp=(p_1,\ldots ,p_k)$, we define $p_{\min} = \min \{ p_1,\ldots , p_k\}$ and $p_{\max} = \max \{ p_1,\ldots , p_k\}$.
Let $\mathbf{1}=(1,\ldots,1)$ be a vector such that each element is $1$.


\subsection{Graph parameters}
A vertex subset $M \subseteq V$ is a \emph{module} of a graph $G$ if any pair of $u,v$ in $M$ satisfies that $ N_G(u) \setminus M = N_G(v) \setminus M $.

\begin{definition}[Modular-width]
A graph $G=(V,E)$ has \emph{modular-width} at most $\ell$ ($\ge 2$) if it satisfies (i) $|V|\le \ell$, or (ii) there is a partition $(V_1, \ldots, V_{\ell})$ of $V$ such that for each $i\in [\ell]$, $V_i$ is a module and $G[V_i]$ has \emph{modular-width} at most $\ell$. The minimum $\ell$ such that $G$ has modular-width at most $\ell$ is denoted by $\mw(G)$.
\label{def:MW}
\end{definition}


There is a polynomial-time algorithm that computes $\mw(G)$ and its decomposition \cite{tedder2008simpler}.
\begin{definition}[Neighborhood diversity]
A graph $G=(V,E)$ has \emph{neighborhood diversity} at most $\ell$ if there is a partition $(V_1,\ldots,V_{\ell})$ of $V$ such that every pair of vertices $u,v$ in $V_i$ satisfies $N_G(u)\setminus \{v\} = N_G(v)\setminus \{u\}$ for each $i\in [\ell]$.
The minimum $\ell$ is denoted by $\nd(G)$.
\label{def:ND}
\end{definition}
Note that each of $V_i$'s in Definition \ref{def:ND} is a module of $G$ and it forms either an independent set or a clique.
As with modular-width, there is a
polynomial-time algorithm for computing $\nd(G)$ and its partition \cite{Lampis2012:nd}. 

\begin{proposition}\label{prop:mw:complement}
For any graph $G=(V,E)$, $ \mw(G) = \mw(\overline{G})$ holds. 
\end{proposition}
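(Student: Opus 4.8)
The plan is to show that a modular decomposition of $G$ is, verbatim, a modular decomposition of $\overline{G}$, and conversely. Two elementary facts drive the argument. First, a vertex subset $M\subseteq V$ is a module of $G$ if and only if it is a module of $\overline{G}$: being a module is equivalent to the statement that every vertex $w\in V\setminus M$ is adjacent to all of $M$ or to none of $M$, and complementation merely swaps these two cases, so the property is preserved. Second, for every $S\subseteq V$ we have $\overline{G}[S]=\overline{G[S]}$, which is immediate from the definitions.

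Given these, I would establish $\mw(G)\le\ell \iff \mw(\overline{G})\le\ell$ by induction on $|V|$; since $\overline{\overline{G}}=G$, proving one implication suffices. The base case $|V|\le\ell$ is trivial. For the inductive step, I would take a partition $(V_1,\dots,V_\ell)$ of $V$ witnessing $\mw(G)\le\ell$ (so each $V_i$ is a module of $G$ with $\mw(G[V_i])\le\ell$), observe by the first fact that each $V_i$ is also a module of $\overline{G}$, and apply the induction hypothesis to $G[V_i]$ --- legitimate since $|V_i|<|V|$ --- together with the second fact $\overline{G}[V_i]=\overline{G[V_i]}$ to conclude $\mw(\overline{G}[V_i])\le\ell$. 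Thus $(V_1,\dots,V_\ell)$ also witnesses $\mw(\overline{G})\le\ell$, and combining the two implications yields $\mw(G)=\mw(\overline{G})$.

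I expect no substantial obstacle; the only delicate point is the recursion in Definition~\ref{def:MW}, where one must ensure that the witnessing partition in the inductive step can be chosen with each $V_i$ a proper subset of $V$ so that the induction hypothesis genuinely applies (equivalently, one reads the definition through its modular decomposition tree and notes that the tree structure is shared by $G$ and $\overline{G}$). Everything else is a direct unwinding of the definitions of module, induced subgraph, and complement.
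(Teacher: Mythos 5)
Your proof is correct and follows essentially the same route as the paper's: one implication plus $\overline{\overline{G}}=G$, induction on $|V|$, preservation of modules under complementation, and the identity $\overline{G}[V_i]=\overline{G[V_i]}$. The only difference is that you explicitly flag the need for the witnessing partition to have proper parts so the induction applies, a degenerate case the paper's proof passes over silently.
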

\begin{proof}
It is sufficient to show that if $G$ has modular-width at most $\ell$, then $\overline{G}$ has modular-width at most $\ell$.
We show this claim by induction on the number of vertices $n$.
First, if $n \leq \ell$, then  both $G$ and $\overline{G}$ clearly satisfy condition (i), so the claim holds.

Next, assume that $n > \ell$ and that the claim holds for any graph whose number of vertices is less than $n$.
Let $(V_1,\ldots, V_{t})$ be a partition of $V$ such that each $V_i$ is a module and $G[V_i]$ has modular-width at most $\ell$. Note that $t\le \ell$.
Then, for each pair of $u,v \in V_i$, it holds that:
\begin{align*}
N_{\overline{G}}(u)\setminus V_i
&= (V\setminus N_{G}(u))\setminus V_i\\
&= (V\setminus N_{G}(v))\setminus V_i\\
&=N_{\overline{G}}(v)\setminus V_i.
\end{align*}
Therefore, $V_i$ is module of $\overline{G}$.
Furthermore, since $\overline{G}[V_i] =\overline{G[V_i]}$, 
$\overline{G}[V_i]$ has modular-width at most $\ell$ by the assumption of induction.
Therefore, $\overline{G}$ satisfies condition (ii) of Def.\ref{def:MW}.
\end{proof}

\begin{proposition}\label{prop:mw-nd}
For any connected graph $G=(V,E)$, $ \nd(G^2)\le \mw(G)$ holds, where $G^2$ is the second power of $G$.
\end{proposition}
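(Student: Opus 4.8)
The plan is to prove this by exhibiting an explicit neighborhood-diversity partition of $G^2$, namely the top-level partition of the modular-width decomposition of $G$. Write $\ell = \mw(G)$. If $|V| \le \ell$, the bound is immediate, since $\nd(H)$ never exceeds $|V(H)|$. So I would assume $|V| > \ell$, so that condition (ii) of Definition~\ref{def:MW} provides a partition $(V_1,\dots,V_t)$ of $V$ into $t \le \ell$ modules of $G$; because this partition witnesses the recursion in Definition~\ref{def:MW} it must be non-trivial, hence every $V_i$ is a nonempty proper subset of $V$. The goal is then to show that $(V_1,\dots,V_t)$ is a valid neighborhood-diversity partition of $G^2$, which gives $\nd(G^2) \le t \le \ell = \mw(G)$.

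The argument rests on two observations about a proper module $M$ of the connected graph $G$. First, $M$ induces a clique in $G^2$: connectivity of $G$ together with $M \subsetneq V$ yields some $w \notin M$ with a neighbor in $M$, and since $M$ is a module $w$ is then adjacent to every vertex of $M$, so any two vertices of $M$ are within distance $2$ in $G$. Second, for $u,v \in M$ and any $x \notin M$ one has $\dist_G(u,x) \le 2 \iff \dist_G(v,x) \le 2$; I would prove the forward direction (the other being symmetric) by splitting on whether $\dist_G(u,x)$ equals $1$ or $2$, and in the length-$2$ case on whether the intermediate vertex lies in $M$, each time invoking $N_G(a)\setminus M = N_G(b)\setminus M$ for $a,b\in M$ to transport the witnessing edge or path from $u$ to $v$. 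Together these two facts say that for $u,v$ in the same $V_i$ the neighborhoods $N_{G^2}(u)$ and $N_{G^2}(v)$ agree outside $V_i$ (each being $\{x\notin V_i : \dist_G(\cdot,x)\le 2\}$) and each contains all of $V_i$ except the owner vertex, so $N_{G^2}(u)\setminus\{v\} = N_{G^2}(v)\setminus\{u\}$, as required.

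I expect the only delicate point to be the second observation, specifically the sub-case in which the distance-$2$ path from $u$ to $x$ passes through a third vertex of the module $M$; there one must apply the module identity to that middle vertex (rather than to $u$) to conclude $\dist_G(v,x)=1$. It is also worth stating explicitly why the top-level partition may be taken non-trivial, so that the ``some $w\notin M$'' step in the clique observation is legitimate: this is forced by well-foundedness of Definition~\ref{def:MW}, as otherwise the recursion would never bottom out. Finally, connectivity of $G$ is genuinely used in the clique step, matching the hypothesis of the proposition.
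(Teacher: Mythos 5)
Your proposal is correct and follows essentially the same route as the paper's proof: take the top-level modular partition, use connectivity to show each module becomes a clique in $G^2$, and use the module identity $N_G(u)\setminus V_i = N_G(v)\setminus V_i$ to show the $G^2$-neighborhoods agree outside each part. You simply spell out in more detail (via the case analysis on distance-$1$ versus distance-$2$ witnesses) a step the paper asserts in one line, and you make explicit the non-triviality of the partition, which the paper leaves implicit.
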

\begin{proof}
If $|V|\le \mw(G)$, we are done as $\nd(G^2)\le |V|$.
Otherwise, consider a partition $(V_1,\ldots, V_{\ell})$ of $V$ such that $V_i$ is a module for each $i\in [\ell]$ where $\ell\le \mw(G)$.   Since $G$ is connected, any module is adjacent to at least one module, and vertices between two modules are completely joined;  that is, for the two modules $V_i$ and $V_j$, there is an edge $\{u,v\}$ between any pair of $u\in V_i$ and $v\in V_j$.
Thus, the distance of each pair of vertices in a module is at most 2, and hence each module forms a clique in $G^2$. Furthermore, for each pair of $u,v\in V_i$, $N_{G^2}(u)\setminus V_i = N_{G^2}(v)\setminus V_i$ follows from $N_G(u)\setminus V_i = N_G(v)\setminus V_i$. Therefore, $N_{G^2}(u)\setminus \{v\} = N_{G^2}(v)\setminus \{u\}$ holds, which implies  $ \nd(G^2)\le \mw(G)$.
\end{proof}


Finally, we introduce the clique-width $\cw(G)$ of $G$, which is a more general graph parameter than tree-width, modular-width, and neighborhood diversity. Namely, if some problem is not in FPT for tree-width, modular-width or neighborhood diversity, it is also not in FPT for clique-width. It is defined by some tree structures like tree-width, but we omit the detailed definition in this paper. Clique-width is a well-studied graph parameter, and many results are known. For example, cographs are the graph class of clique-width at most 2. We refer readers to \cite{Courcelle2000}.
In order to show the W[1]-hardness of \textsc{$L(2,1)$-Labeling} on graphs with diameter 2 when parameterized by clique-width  in Section \ref{sec:others}, we prove that \textsc{Hamiltonian Path} is W[1]-hard.

\begin{theorem}
\textsc{Hamiltonian Path} is W[1]-hard for clique-width.
\end{theorem}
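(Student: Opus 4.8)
The plan is to reduce from a known W[1]-hard problem for clique-width. The natural candidate is \textsc{Hamiltonian Cycle}, which Fomin, Golovach, Lokshtanov, and Saurabh showed to be W[1]-hard parameterized by clique-width; alternatively one can reduce directly from a canonical W[1]-hard problem such as \textsc{Multicolored Clique} and bake in a Hamiltonicity-style gadget, but going through \textsc{Hamiltonian Cycle} is cleaner. So first I would recall that \textsc{Hamiltonian Cycle} parameterized by clique-width has no $f(\cw)\cdot n^{O(1)}$ algorithm unless FPT $=$ W[1].

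Next I would give the reduction from \textsc{Hamiltonian Cycle} on a graph $G$ to \textsc{Hamiltonian Path} on a graph $G'$. The standard trick: pick an arbitrary vertex $v$ of $G$, split it into two copies $v_1, v_2$, each inheriting all of $v$'s incident edges (so $N_{G'}(v_i) = N_G(v)$ for $i=1,2$, and $v_1, v_2$ are nonadjacent), and then attach a pendant vertex $v_1'$ adjacent only to $v_1$ and a pendant vertex $v_2'$ adjacent only to $v_2$. Then $G'$ has a Hamiltonian path (which is forced to run $v_1' - v_1 - \cdots - v_2 - v_2'$) if and only if $G$ has a Hamiltonian cycle through $v$, i.e. a Hamiltonian cycle. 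This equivalence is routine to verify.

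The crux is the parameter bound: I must show $\cw(G') = \cw(G) + O(1)$. Duplicating a vertex into a twin increases clique-width by at most a constant (a module that is a pair of false twins can be built from a single-vertex expression by relabeling), and adding a pendant vertex also increases clique-width by at most a constant (introduce the pendant with a fresh label, join it to the label class of its unique neighbor, then recolor). Composing these operations a constant number of times keeps the clique-width within an additive constant of $\cw(G)$; one takes a clique-width expression for $G$ and patches it locally at the steps that introduce $v$ and its pendant. I expect this bookkeeping on the clique-width expression to be the main obstacle — not conceptually hard, but it requires care to track which label classes are "live" when $v$ is introduced so that the duplication and the two pendant attachments can be inserted without blowing up the label count. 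Combining the reduction's correctness with the clique-width bound and the W[1]-hardness of \textsc{Hamiltonian Cycle} for clique-width yields the theorem.
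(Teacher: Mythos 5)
Your proposal is correct and follows essentially the same route as the paper: a reduction from \textsc{Hamiltonian Cycle} (W[1]-hard for clique-width by Fomin et al.) that creates a false twin of an arbitrary vertex $v$ (your ``splitting'' of $v$ into $v_1,v_2$ is the same operation), attaches a pendant to each copy to force the endpoints of the Hamiltonian path, and observes that these modifications increase the clique-width by only an additive constant. No substantive difference from the paper's argument.
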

\begin{proof}
We reduce \textsc{Hamiltonian Cycle}, which is W[1]-hard for clique-width \cite{Fomin2010:cw}. Given a graph $G=(V,E)$ of clique-width $\cw(G)$, pick arbitrary vertex $v$ and add a new vertex $v'$ that is adjacent to vertices in $N(v)$. That is, $v$ and $v'$ are false twins. Then we further add two vertices $w, w'$ that are adjacent to $v$ and $v'$, respectively. It is easily seen that $G$ has a hamiltonian cycle if and only if the constructed graph $G'$ has a hamiltonian path from $w$ to $w'$.
Since adding a vertex that is a false twin for some vertex to $G$ does not change the clique-width and adding a leaf vertex increases the clique-width by at most $2$,  $\cw(G')\le \cw(G)+4$ holds. This completes the proof.
\end{proof}

\section{Main results}\label{sec:results}
In this section, we show a polynomial-time reduction from  \textsc{$L(\bp)$-Labeling} to \textsc{Metric Path TSP}. 
\textsc{Path TSP} is the problem to finding a hamiltonian path of minimum weight on an edge-weighted complete graph.
Furthermore, \textsc{Metric Path TSP} is the restricted version of \textsc{Path TSP} such that the edge-weights of the input graph satisfy the triangular inequality.

\begin{theorem}\label{thm:DCLPtoTSP}
If $p_{\max}\leq 2p_{\min}$, \textsc{$L(\bp)$-Labeling} on graphs of diameter at most $k$ can be reduced to \textsc{Metric Path TSP} in $O(nm)$ time.
\end{theorem}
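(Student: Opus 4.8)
The plan is to observe that, under the hypotheses, an $L(\bp)$-labeling is essentially an ordering of $V$ equipped with ``gaps'' between consecutive vertices, that the cheapest gaps for a fixed ordering are dictated by a metric on $V$, and hence that \textsc{$L(\bp)$-Labeling} becomes literally an instance of \textsc{Metric Path TSP}.

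First I would record two structural facts. We may assume $\bp\neq\mathbf{0}$ (the case $\bp=\mathbf{0}$ being trivial), so $p_{\min}\ge 1$ by the hypothesis $p_{\max}\le 2p_{\min}$. Since $\mathrm{diam}(G)\le k$, every two distinct vertices $u,v$ have $\dist_{G}(u,v)=d\in[k]$, so any $L(\bp)$-labeling $l$ satisfies $|l(u)-l(v)|\ge p_d\ge p_{\min}\ge 1$. \emph{Fact (i):} all labels are therefore pairwise distinct, so $l$ induces a linear order $v_{\pi(1)},\dots,v_{\pi(n)}$ with $l(v_{\pi(1)})<\cdots<l(v_{\pi(n)})$; shifting every label down by $\min_v l(v)$ does not increase the span, so we may take $l(v_{\pi(1)})=0$ and then the span equals $\sum_{i=1}^{n-1}\bigl(l(v_{\pi(i+1)})-l(v_{\pi(i)})\bigr)$. \emph{Fact (ii):} once the order is fixed, non-consecutive constraints are automatic: for $i<j$ with $j\ge i+2$, the value $l(v_{\pi(j)})-l(v_{\pi(i)})$ is a sum of at least two consecutive gaps, each $\ge p_{\min}$, hence $\ge 2p_{\min}\ge p_{\max}\ge p_{\dist_{G}(v_{\pi(i)},v_{\pi(j)})}$, where the middle step is exactly the hypothesis.

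Next I would define the target instance: the complete graph $H$ on $V$ with weight $w(\{u,v\})=p_{\dist_{G}(u,v)}$ for distinct $u,v$. For any three distinct vertices $u,v,x$, $w(\{u,x\})+w(\{x,v\})\ge 2p_{\min}\ge p_{\max}\ge w(\{u,v\})$, so $w$ obeys the triangle inequality and $H$ is a genuine \textsc{Metric Path TSP} instance. By Fact (ii), the minimum-span labeling consistent with an order $\pi$ uses gap exactly $p_{\dist_{G}(v_{\pi(i)},v_{\pi(i+1)})}$, so its span equals the weight in $H$ of the Hamiltonian path $v_{\pi(1)}-\cdots-v_{\pi(n)}$; conversely, from any Hamiltonian path of $H$ one recovers a valid $L(\bp)$-labeling of equal span by setting $l(v_{\pi(1)})=0$ and $l(v_{\pi(i+1)})=l(v_{\pi(i)})+w(\{v_{\pi(i)},v_{\pi(i+1)}\})$ (validity for non-consecutive pairs is again Fact (ii)). Combined with Fact (i), this shows $\lambda_{\bp}(G)$ equals the optimum of the \textsc{Path TSP} instance, and since the bijection between orderings and Hamiltonian paths is explicit, it also transfers $\alpha$-approximate solutions in either direction. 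For the running time, building $H$ requires only the all-pairs distances of $G$, which BFS from each vertex yields in $O(n(n+m))=O(nm)$ time (using $m\ge n-1$ for a connected graph).

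I expect the only delicate point to be Fact (ii) --- making precise that $p_{\max}\le 2p_{\min}$ is exactly the inequality that lets a sum of two or more unit-or-larger gaps dominate every entry of $\bp$ --- together with the minor care of stating the triangle inequality only for distinct vertices, so that no undefined quantity $p_0$ is ever invoked. Everything else is routine bookkeeping.
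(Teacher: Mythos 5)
Your proposal is correct and follows essentially the same route as the paper: the same weighted complete graph $H$ with $w(u,v)=p_{\dist_G(u,v)}$, the same use of $p_{\min}\le w\le p_{\max}\le 2p_{\min}$ for the triangle inequality, and the same identification of the minimum span for a fixed vertex order with the length of the corresponding Hamiltonian path in $H$. The only difference is cosmetic: where the paper proves the key claim by induction (showing $\max_{j\le i-1}\{l(v_j)+w_{j,i}\}$ is attained at $j=i-1$), you observe directly that any pair separated by two or more gaps automatically satisfies its constraint since $2p_{\min}\ge p_{\max}$, which is a slightly cleaner verification of the same fact.
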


\begin{proof}
First, we define an edge-weighted complete graph $H=(V,\binom{V}{2},w)$ from an input graph $G$ (see Figure \ref{fig:fixed_order_labeling}).
For a pair of vertices $u,v\in V$ with $\mathrm{dist}_{G}(u,v)=d$, the edge weight of $\{u,v\}$ in $H$ is defined by $w(u,v)=p_d$.
Note that since $\mathrm{diam}(G)\leq k$, $w(u,v)$ is well-defined. Furthermore,   $p_{\min}\leq w(u,v) \leq 2p_{\min}$ holds by $p_d\leq 2 p_{\min}$ for each $d\in [k]$, and thus $w$ satisfies the  triangle inequality.

\begin{figure}[tb]
\centering
\includegraphics[keepaspectratio,width=0.8\hsize]{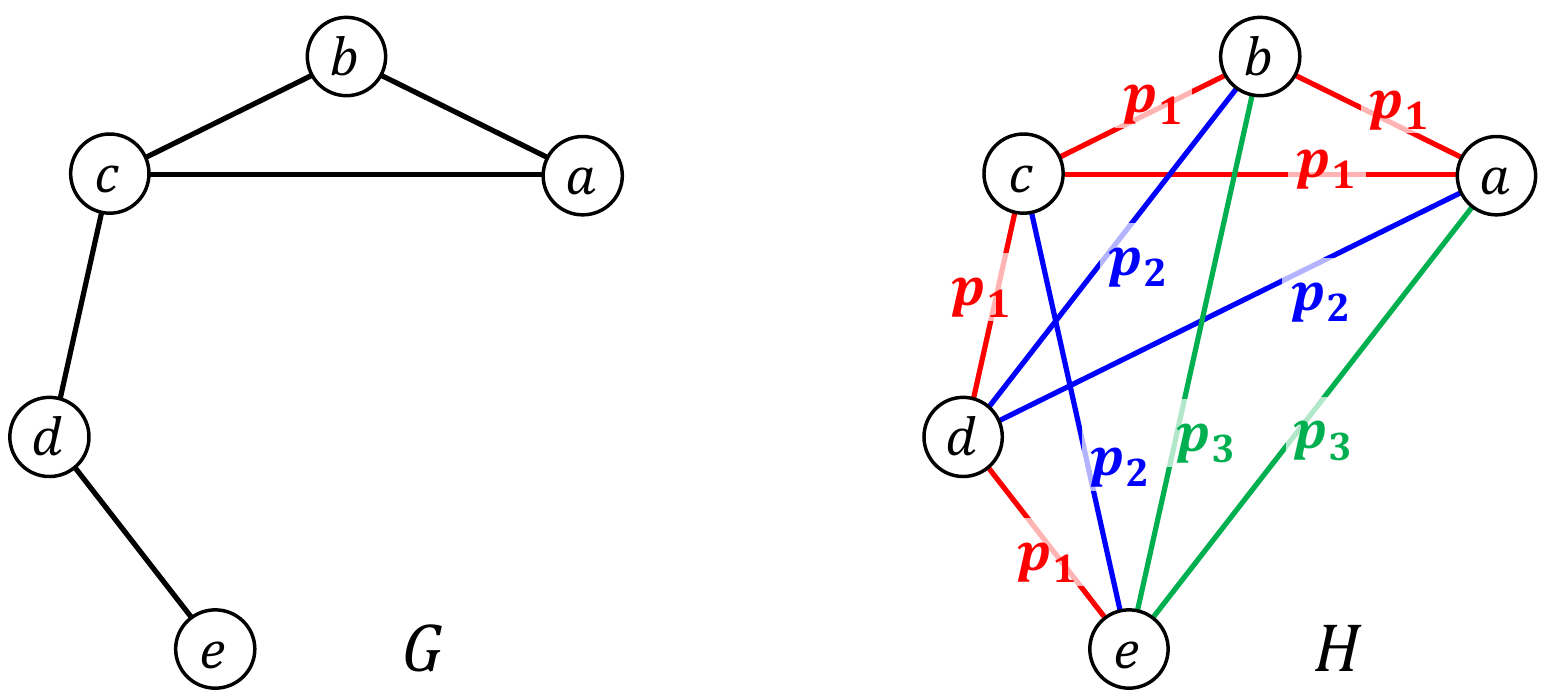}
\caption{The construction of $H$ for \textsc{$L(p_1,p_2,p_3)$-Labeling} on $G$ with diameter $3$.} \label{fig:fixed_order_labeling}
\end{figure}



For a permutation $\pi\colon V\to [n]$, we say that an $L(\bp)$-labeling $l$ is an \emph{$L(\bp)$-labeling for $\pi$} if it satisfies $l\left(\pi^{-1}(1)\right)\leq l\left(\pi^{-1}(2)\right)\leq \cdots\leq l\left(\pi^{-1}(n)\right)$. We denote by $\lambda_{\bp}(G,\pi)$ the minimum span among all of $L(\bp)$-labelings for $\pi$.
Here, we observe that any minimum $L(\bp)$-labelings for $\pi$ satisfies $l\left(\pi^{-1}(1)\right)=0$. If not, we obtain another labeling $l'$ such that $l'((\pi^{-1}(i))=l((\pi^{-1}(i))-1$, which contradicts the minimality of $l$.


Given a permutation $\pi$, let $l$ be an $L(\bp)$-labeling for $\pi$ with minimum span $\lambda_{\bp}(G,\pi)$ on $G$.
In the following, we denote $v_i=\pi ^{-1}(i)$ and $w_{i,j}=w(v_{i},v_{j})$ for simplicity. Then we show the following key claim, which implies that $l\left(v_i\right)$ is the length (sum of weights) of path $(v_1,v_2,\cdots,v_i)$ on $H$.

\begin{claim}\label{claim:induction}
For the edge-weighted complete graph $H$, the labeling $l$ satisfies that $l\left(v_i\right)=\sum_{t=1}^{i-1}w_{t,t+1}$ for any $i\in[n]$.
\end{claim}

\begin{proof}
We prove the claim by induction on $i$.
As the base case,  we have that $l(v_1)=0$. 
Furthermore, we consider the case of $i=2$. Since $l$ is a minimum $L(\bp)$-labeling for $\pi$, it satisfies that $0=l\left(v_1\right)\leq l\left(v_2\right)\leq \cdots\leq l\left(v_n\right)$. Since $0=l\left(v_1\right)\leq l\left(v_2\right)$, we have $l\left(v_2\right)\ge l\left(v_1\right)+p_{\dist_G(v_1,v_2)} = w_{1,2}$. Moreover, $l(v_2)\le w_{1,2}$ follows from $l\left(v_2\right)\leq \cdots\leq l\left(v_n\right)$ and the minimality of $l$. Thus, the claim holds when $i=2$.

In the induction step, assume that the claim holds for each $j\in [i-1]$.
By the minimality of $l$ and $l\left(v_1\right)\leq \cdots\leq l\left(v_n\right)$, the label of $v_{i}$ can be expressed as: 
\begin{align*}
l\left(v_{i}\right)
&=\min \left\{x \mid x\geq l\left(v_j\right)+p_{\dist_G(v_j,v_{i})}, \forall j\in [i-1]\right\}\nonumber \\
&=\min \left\{x \mid x\geq l\left(v_j\right)+w_{j,i}, \forall j\in [i-1]\right\}\nonumber \\
&=\max_{j \in [i-1]} \left\{ l\left(v_j\right)+w_{j,i} \right\}. \nonumber
\end{align*}
For each $j\in [i-2]$, it holds that
\begin{align*}
l\left(v_{i-1}\right)-l\left(v_j\right)
&=\sum_{t=1}^{i-2}w_{t,t+1}-\sum_{t=1}^{j-1}w_{t,t+1} \\
&=\sum_{t=j}^{i-2}w_{t,t+1}\\
&\geq w_{i-2,i-1}\geq p_{\min}.
\end{align*}
Furthermore, $w_{i-1,i}-w_{j,i}\geq p_{\min}-2p_{\min}=-p_{\min}$ holds.
Thus, for any $j\in [i-2]$, we have: 
\begin{align*}
    & (l\left(v_{i-1}\right)+w_{i-1,i})-(l\left(v_j\right)+w_{j,i})\\
&=(l\left(v_{i-1}\right)-l\left(v_j\right))+(w_{i-1,i}-w_{j,i})\\
&\geq p_{\min} -p_{\min} = 0.
\end{align*}
Consequently, we obtain 
\begin{align*}
   l\left(v_{i}\right) 
   &= \max_{j \in [i-1]} \left\{ l\left(v_j\right)+w_{j,i+1} \right\}\\
   &=l\left(v_{i-1}\right)+w_{i-1,i}\\
   &=\sum_{t=1}^{i-1}w_{t,t+1}.
\end{align*}

\end{proof}

Claim \ref{claim:induction} means that $\lambda_{\bp}(G,\pi)=l(v_n)$ is equivalent to the length of the hamiltonian path $\pi$ on $H$. Since $\lambda_{\bp}(G)=\min_{\pi} \left\{\lambda_{\bp}(G,\pi)\right\}$,  \textsc{Path TSP} on $H$ is equivalent to  \textsc{$L(\bp)$-Labeling}  on $G$. 


Finally, we discuss the running time of the reduction.
For the construction of $H$, we create the distance matrix of $G$. This can be done in $O(nm)$ time by the breadth-first search for each vertex. We then construct the weighted adjacency matrix of $H$ from the distance matrix of $G$. Clearly, it can be constructed in $O(n^2)$ time.
Thus, the total running time is $O(nm)+O(n^2)=O(nm)$.
\end{proof}
\begin{figure}[htb]
\centering
\includegraphics[keepaspectratio,width=\hsize]{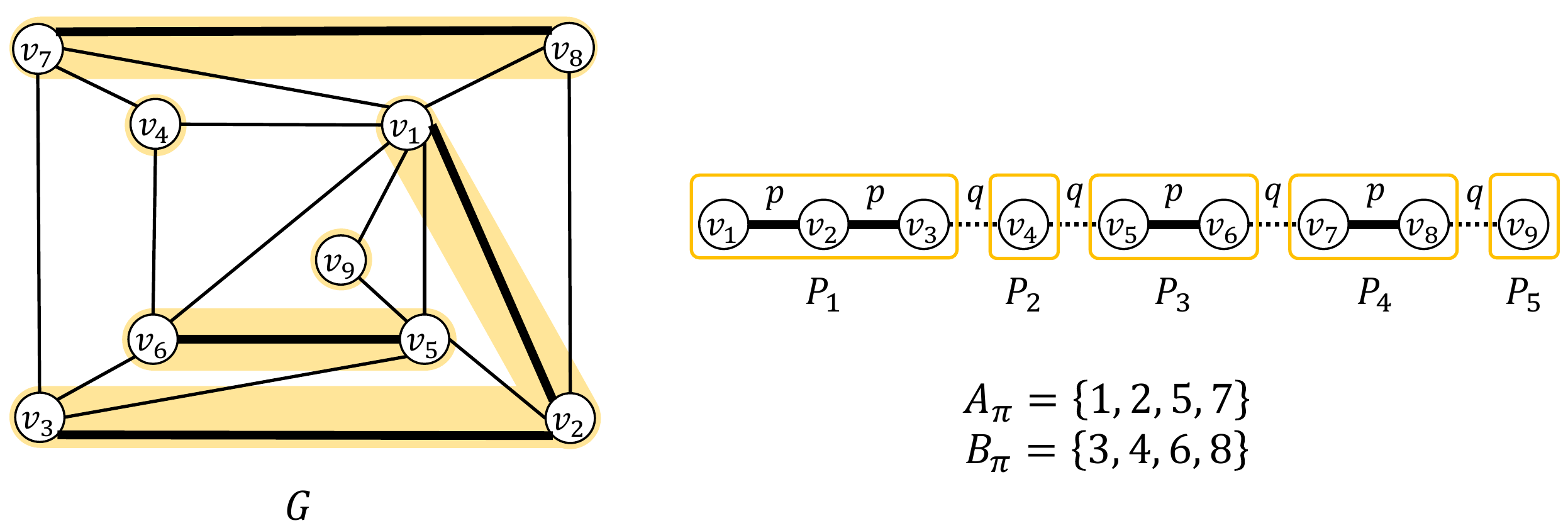}
\caption{Paths $P_1, \ldots, P_5$ consisting of only edges of weight $p$ along $\pi$ correspond to paths in $G$.} \label{figaar:L(pq)FPTmw}
\end{figure}

As a corollary, we can obtain an optimal solution in $O(2^nn^2)$ time and a $1.5$-approximate solution in polynomial time by applying algorithms for \textsc{Metric Path TSP} proposed in \cite{held1962dynamic} and \cite{zenklusen20191}, respectively, after the above reduction.
\begin{corollary}\label{cor:ExactExpo}
If $p_{\max}\leq 2p_{\min}$, \textsc{$L(\bp)$-Labeling} on graphs of diameter at most $k$ can be solved in $O(2^nn^2)$ time. Furthermore, it is approximable within $1.5$ in polynomial time.
\end{corollary}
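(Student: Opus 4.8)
The plan is to invoke Theorem~\ref{thm:DCLPtoTSP} as a black box and then simply feed the resulting \textsc{Metric Path TSP} instance into two known algorithms. First I would observe that, given an instance of \textsc{$L(\bp)$-Labeling} on a graph $G$ with $\mathrm{diam}(G)\le k$ and $p_{\max}\le 2p_{\min}$, Theorem~\ref{thm:DCLPtoTSP} produces in $O(nm)=O(n^3)$ time an equivalent \textsc{Metric Path TSP} instance $H$ on $n$ vertices, in the sense that the optimal Hamiltonian path weight in $H$ equals $\lambda_{\bp}(G)$ and any Hamiltonian path of $H$ of weight $W$ yields (via Claim~\ref{claim:induction}, reading off $l(v_i)=\sum_{t<i}w_{t,t+1}$) an explicit $L(\bp)$-labeling of span $W$.

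For the exact bound, I would recall that \textsc{Path TSP} is solvable by the Held--Karp dynamic program~\cite{held1962dynamic}: one computes, for every nonempty $S\subseteq V$ and every $v\in S$, the minimum weight of a Hamiltonian path of $H[S]$ starting at the fixed source and ending at $v$; this table has $O(2^n n)$ entries, each filled in $O(n)$ time, giving $O(2^n n^2)$ overall (one may either fix the start vertex and take the cheapest end, or add a trick to leave both endpoints free, which costs only a polynomial factor and is absorbed). Since the reduction overhead $O(nm)$ is dominated by $O(2^n n^2)$, this establishes the first claim. For the approximation bound, I would apply Zenklusen's deterministic $1.5$-approximation for \textsc{Metric Path TSP}~\cite{zenklusen20191} to $H$: it runs in polynomial time and returns a Hamiltonian path of weight at most $1.5\cdot\mathrm{OPT}(H)=1.5\,\lambda_{\bp}(G)$, which translates back into an $L(\bp)$-labeling of span at most $1.5\,\lambda_{\bp}(G)$; combined with the polynomial-time reduction, this gives a polynomial-time $1.5$-approximation.

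The only genuine subtlety — not really an obstacle — is the bookkeeping of endpoints: \textsc{Metric Path TSP} as usually stated comes in variants with zero, one, or two prescribed endpoints, and our reduction corresponds to the variant where neither endpoint is fixed (we minimize over all permutations $\pi$, hence over all orderings and in particular both endpoints of the path). Both the Held--Karp routine and Zenklusen's algorithm handle the unrestricted-endpoint version (or can be adapted to it at polynomial cost), so one just needs to remark this explicitly. Everything else is a direct substitution of the cited running-time and approximation guarantees into the reduction of Theorem~\ref{thm:DCLPtoTSP}, so the corollary follows immediately.
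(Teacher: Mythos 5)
Your proposal is correct and matches the paper's argument exactly: the paper likewise obtains the corollary by composing the reduction of Theorem~\ref{thm:DCLPtoTSP} with the Held--Karp dynamic program for the exact $O(2^n n^2)$ bound and Zenklusen's deterministic $1.5$-approximation for \textsc{Metric Path TSP}. Your additional remarks on endpoint bookkeeping are a harmless elaboration of the same route.
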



Further observation shows that our problem is fixed-parameter tractable for modular-width.
\begin{corollary}\label{cor:mw}
The \textsc{$L(p,q)$-Labeling} problem on graphs of diameter at most 2 is fixed-parameter tractable for modular-width.
\end{corollary}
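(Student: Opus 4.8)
The plan is to reduce \textsc{$L(p,q)$-Labeling} on diameter-$\le 2$ graphs, via Theorem~\ref{thm:DCLPtoTSP}, to a combinatorial question about partitioning the graph into paths, and then invoke an FPT-for-modular-width result for that question. First I would observe that we are in the case $k=2$, so $\bp=(p,q)$, and the condition $p_{\max}\le 2p_{\min}$ needed by Theorem~\ref{thm:DCLPtoTSP} is \emph{not} automatic; so the first thing to settle is that when $p_{\max}>2p_{\min}$ (in particular when $q=1$ and $p\ge 3$) one can still handle the problem directly, or alternatively restrict the corollary's scope to the regime where the reduction applies. I expect the intended reading is that for $k=2$ the reduction target $H$ has only two distinct edge weights, namely $q$ on the edges of $G$ and $p$ on the non-edges of $G$ (equivalently the edges of $\overline G$, since $\mathrm{diam}(G)\le 2$ means every non-edge is at distance exactly $2$). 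So the \textsc{Metric Path TSP} instance $H$ is a complete graph with $2$-valued edge weights.

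Next I would translate minimizing the Hamiltonian path weight on this $2$-valued instance into a path-cover statement on $\overline G$. A Hamiltonian path $\pi=(v_1,\dots,v_n)$ of $H$ has weight $\sum_{i=1}^{n-1} w_{i,i+1}$; writing $a$ for the number of consecutive pairs $\{v_i,v_{i+1}\}$ that are edges of $G$ (weight $q$) and $n-1-a$ for those that are non-edges (weight $p$), the weight is $qa+p(n-1-a)=p(n-1)-(p-q)a$. Since $p\ge q$, minimizing weight is the same as \emph{maximizing} $a$, the number of $G$-edges used consecutively along $\pi$ — i.e.\ finding a vertex ordering that traverses as many edges of $G$ as possible between consecutive positions. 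Equivalently, the maximal runs of consecutive $G$-edges decompose $\pi$ into a collection of vertex-disjoint paths in $G$ (some possibly single vertices) covering $V$, and $a = n - (\text{number of paths})$; so maximizing $a$ is exactly minimizing the number of parts in a partition of $V(G)$ into simple paths of $G$. That is precisely the \textsc{Partition into Paths} problem (equivalently, the path cover number of $G$), and any path partition of $G$ into $c$ paths can be concatenated in an arbitrary order to realize a Hamiltonian path of $H$ of weight $p(n-1)-(p-q)(n-c)$. Combining with Claim~\ref{claim:induction}/Theorem~\ref{thm:DCLPtoTSP}, $\lambda_{(p,q)}(G)=p(n-1)-(p-q)(n-c^\*)$ where $c^\*$ is the minimum path cover number of $G$; see Figure~\ref{figaar:L(pq)FPTmw}.

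Finally I would invoke the known result that \textsc{Partition into Paths} (computing the path cover number) is fixed-parameter tractable parameterized by modular-width~\cite{gajarsky2013parameterized}: run that algorithm on $G$ to get $c^\*$ in FPT time $f(\mw(G))\cdot\mathrm{poly}(n)$, and output $p(n-1)-(p-q)(n-c^\*)$. Since the reduction of Theorem~\ref{thm:DCLPtoTSP} runs in $O(nm)$ time and the above correspondence is an exact, poly-time-checkable equivalence, the whole procedure is FPT for modular-width. I would also remark that if one wants the labeling itself (not just the span), the path partition gives the ordering $\pi$ and then the greedy labeling of Claim~\ref{claim:induction} produces an optimal $l$ in polynomial time.

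The main obstacle I anticipate is handling the condition $p_{\max}\le 2p_{\min}$: for $k=2$ this is $p\le 2q$, which fails for the practically central case $q=1,\ p\ge 3$. One option is to note that the equivalence "minimize weight $\Leftrightarrow$ minimize path cover number" in the $2$-valued setting actually does not need the triangle inequality at all — concatenating any path partition in any order is always feasible for \textsc{Path TSP} on a complete graph — so the corollary can be proved for all $p\ge q\ge 1$ of diameter $\le 2$ by this direct argument rather than by literally passing through \textsc{Metric} Path TSP; the other option is to state the corollary under the same $p\le 2q$ hypothesis as Theorem~\ref{thm:DCLPtoTSP}. I would go with the direct argument, since it is cleaner and strictly more general, and only cite Theorem~\ref{thm:DCLPtoTSP} for intuition. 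The remaining routine points (that single vertices count as length-$0$ paths, that $\overline G$'s edges are exactly $G$'s distance-$2$ pairs under $\mathrm{diam}\le 2$, and bookkeeping the additive/multiplicative constants) are straightforward.
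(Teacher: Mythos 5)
Your overall route is the paper's route: for diameter at most $2$ the instance $H$ is a complete graph with two edge weights, minimizing the Hamiltonian path weight amounts to maximizing the number of ``cheap'' consecutive pairs, the cheap runs form a partition of $V$ into paths of the appropriate graph, and \textsc{Partition into Paths} is FPT for modular-width, with the complement case handled via $\mw(G)=\mw(\overline G)$. However, you have the two weights swapped, and the error propagates into a wrong formula. In $H$ the weight of $\{u,v\}$ is $p_d$ for $d=\dist_G(u,v)$, so edges of $G$ receive weight $p=p_1$ and non-edges receive weight $q=p_2$, not the other way around. Hence in the central case $p\ge q$ the cheap pairs are the \emph{non-edges}, and the quantity to minimize is the path cover number of $\overline G$, not of $G$. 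Sanity check: for $G=K_n$ your formula $p(n-1)-(p-q)(n-c^{*})$ with $c^{*}=1$ yields $q(n-1)$, whereas the true span is $p(n-1)$. The paper splits into the cases $p\le q$ (partition $G$ into paths) and $p>q$ (partition $\overline G$ into paths, using Proposition~\ref{prop:mw:complement}); your write-up needs the same case split with the roles of $G$ and $\overline G$ corrected, after which the FPT conclusion goes through unchanged since $\mw(G)=\mw(\overline G)$.

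Your further claim that the direct path-partition argument removes the hypothesis $p_{\max}\le 2p_{\min}$ is not justified. That hypothesis is not about the feasibility of \textsc{Path TSP}; it is what makes Claim~\ref{claim:induction} true, namely that for a fixed ordering $\pi$ the optimal span equals $\sum_i w_{i,i+1}$ because the binding constraint on $l(v_i)$ always comes from $v_{i-1}$. If $p_{\max}>2p_{\min}$ this fails: with $p=5$, $q=1$ and an ordering in which $\dist_G(v_1,v_2)=\dist_G(v_2,v_3)=2$ but $v_1v_3\in E$, the greedy labeling gives $l(v_3)=\max\{0+5,\,1+1\}=5$, not $l(v_2)+w_{2,3}=2$, so the span for $\pi$ is not the Hamiltonian path length and the whole correspondence to path covers breaks. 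You should therefore keep the hypothesis $\max\{p,q\}\le 2\min\{p,q\}$ (as the paper's proof does implicitly by invoking the construction of Theorem~\ref{thm:DCLPtoTSP}) rather than claim the strictly more general statement.
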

\begin{proof}
Let $G$ be a graph of diameter at most 2 and $H$ be the weighted complete graph obtained from $G$ as in Theorem \ref{thm:DCLPtoTSP}.
Notice that the weight of an edge in $H$ is either $p$ or $q$.

First, we consider the case that $p\leq q$.
For a permutation $\pi$ of $V$, we define: 
\begin{align*}
    A_{\pi}&=\left\{ i\in[n-1] \mid w_{i,i+1}=p \right\}\\  B_{\pi}&=\left\{ i\in[n-1] \mid  w_{i,i+1}=q \right\}.
\end{align*}
Note that $\{\pi^{-1}(i),\pi^{-1}(i+1)\}$ for $i\in A_{\pi}$ corresponds to an edge in $E$.

Since the weight of an edge in $H$ is either $p$ or $q$, the following equation holds: 
\begin{align*}
    \lambda_{\bp}(G,\pi)
&=\sum_{i=1}^{n-1}w_{i,i+1}
=\sum_{i\in A_{\pi}}p +\sum_{i\in B_{\pi}}q\\
&=(n-1)p+(q-p)\left|B_{\pi}\right|.
\end{align*}
Therefore, we have
$\lambda_{\bp}(G)=(n-1)p + (q-p)\min_{\pi}  \left|B_{\pi}\right|$. Since $n, p, q$ are constant,  solving \textsc{$L(\bp)$-Labeling} for $G$ is equivalent to finding $\pi$ that minimizes $\left|B_{\pi}\right|$ on $H$.

Here, let $P_1, \ldots, P_{s}$ be paths along $\pi$ such that each $P_i$ contains only edges with weight $p$ (see Figure \ref{figaar:L(pq)FPTmw}). Note that some $P_i$ could be one vertex. By the definition of such paths, $s=\left|B_{\pi}\right|+1$. We observe that edges in $P_i$ corresponds to edges in $G$. Thus, minimizing $\left|B_{\pi}\right|$ on $H$ is equivalent to the \textsc{Partition into Paths} problem, which is the problem to minimize the number of paths that partition $V$ in $G$. This can be computed in $f(\mw(G)) n^{O(1)}$ time \cite{gajarsky2013parameterized}.



For the case that  $p>q$, we can similarly solve \textsc{$L(p,q)$-Labeling} by computing \textsc{Partition into Paths} on the complementary graph $\overline{G}$ of $G$.
Since $\mw(\overline{G}) = \mw(G)$ by Proposition \ref{prop:mw:complement}, it can also be computed in $f(\mw(G)) n^{O(1)}$ time.
\end{proof}

\section{Related results}\label{sec:others}
In the previous section, we showed that \textsc{$L(p,q)$-Labeling} is fixed-parameter tractable for modular-width on graphs of diameter $2$. In this section, we first point out that \textsc{$L(2,1)$-Labeling} is  W[1]-hard for clique-width even on graphs of diameter $2$.

\begin{theorem}\label{thm:cw:hard}
\textsc{$L(2,1)$-Labeling} on graphs with diameter 2 is W[1]-hard for clique-width.
\end{theorem}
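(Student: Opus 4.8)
The plan is to reduce from \textsc{Hamiltonian Path} on graphs of bounded clique-width, which we just established is W[1]-hard (Theorem~6 in the excerpt). Given an instance $G=(V,E)$ with $n$ vertices and a clique-width expression of width $\cw(G)$, I would build a graph $G'$ of diameter~$2$ such that $G'$ admits an $L(2,1)$-labeling of span at most some target value $\lambda^*$ if and only if $G$ has a Hamiltonian path, while arguing that $\cw(G')$ is bounded by a function of $\cw(G)$. The intuition is exactly the correspondence from Theorem~\ref{thm:DCLPtoTSP} read backwards: on a diameter-$2$ graph, an $L(2,1)$-labeling following a vertex order $\pi$ has span $(n-1)\cdot 1 + (2-1)\cdot|B_\pi| = n-1+|B_\pi|$, where $B_\pi$ counts consecutive pairs in $\pi$ that are \emph{non}-adjacent in $G$. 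Hence minimum span $=2(n-1)$ would force $|B_\pi|=0$, i.e. every consecutive pair adjacent, i.e. $\pi$ is a Hamiltonian path. So the core reduction is essentially ``make $G$ have diameter $2$ without destroying its Hamiltonian-path structure or blowing up clique-width.''

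The concrete construction I would use: take $G$ and add a single universal (dominating) vertex $z$ adjacent to all of $V$; call this $G'$. Then $\mathrm{diam}(G')\le 2$, and adding one universal vertex raises clique-width by at most a small additive constant, so $\cw(G')=O(\cw(G))$, keeping the parameter bounded. The subtlety is that $z$ participates in the labeling and in candidate Hamiltonian paths of $G'$, so I need to control where $z$ sits in the optimal order $\pi$. Since in $G'$ every vertex is adjacent to $z$, inserting $z$ between any two consecutive vertices costs weight $1+1=2$ in place of a single edge of weight $1$ or $2$; placing $z$ at an endpoint costs only $1$. So the right target is $\lambda^* = (n+1-1) + \big(\text{number of non-}G'\text{-edges among consecutive pairs}\big)$: with the $n{+}1$ vertices of $G'$, a span of $n$ is achievable iff there is an ordering of $V\cup\{z\}$ in which every consecutive pair is adjacent in $G'$, and since $z$ is universal this happens iff $z$ can be placed as an endpoint with the remaining $n$ vertices forming a Hamiltonian path of $G$ (or $z$ placed internally splitting a Hamiltonian path of $G$ into two paths whose four endpoints... no — placing $z$ internally still needs the two $V$-neighbors of $z$ arbitrary, but then the two sides must each be Hamiltonian \emph{paths} of $G$, which is a strictly weaker-looking condition). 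I would therefore either (a) pin $z$ to an endpoint by attaching a pendant gadget (as in the Hamiltonian-path proof, a leaf $w$ adjacent only to $z$ — but a leaf destroys diameter $2$, so instead a pair of vertices $w_1,w_2$ each adjacent to $z$ and to each other but to nothing else, forcing them to be consumed at one end), or (b) do the bookkeeping that shows ``$z$ internal, both sides Hamiltonian paths of $G$'' still implies $G$ has a Hamiltonian path (glue the two sub-paths at the shared... they don't share a vertex, so this needs $z$'s two neighbors to be joinable — not automatic). Option (a) is cleaner: add $w_1,w_2$ with $N(w_1)=\{z,w_2\}$, $N(w_2)=\{z,w_1\}$; diameter is still $2$ (every vertex reaches $w_i$ via $z$), clique-width still $O(\cw(G))$, and now the unique way to achieve the minimal span forces the order to be $w_1,w_2,z,v_{\pi(1)},\dots,v_{\pi(n)}$ with $z v_{\pi(1)}$ an edge (always true) and $v_{\pi(1)}\cdots v_{\pi(n)}$ a Hamiltonian path of $G$; symmetric analysis at the other end is blocked because the far endpoint $v_{\pi(n)}$ has no forced neighbor, so I should instead just make the target count exactly right.

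Carrying it out in order: (1) define $G'$ from $G$ by adding the universal vertex $z$ and the gadget $\{w_1,w_2\}$ as above, and verify $\mathrm{diam}(G')= 2$ and $\cw(G')\le \cw(G)+O(1)$ using the standard fact that adding a vertex of prescribed adjacency type costs a bounded additive amount in clique-width (citing the same facts used in Theorem~6); (2) invoke the diameter-$2$ formula from the proof of Theorem~\ref{thm:DCLPtoTSP}/Corollary~\ref{cor:mw}: for $G'$ on $N=n+3$ vertices and $\bp=(2,1)$, $\lambda_{(2,1)}(G',\pi)=(N-1)+|B_\pi|$ where $|B_\pi|$ is the number of consecutive non-adjacent pairs in $\pi$; hence $\lambda_{(2,1)}(G')=N-1+\min_\pi|B_\pi|$; (3) show $\min_\pi|B_\pi|=0$ iff $G'$ has a Hamiltonian path iff $G$ has a Hamiltonian path — the forward direction by deleting $w_1,w_2,z$ from any spanning path of $G'$ and checking the residue is still a path (here the gadget forces $w_1,w_2$ to occupy a pendant end and $z$ to sit next to them, so the residue is a genuine Hamiltonian path of $G$), the reverse direction by prepending $w_1,w_2,z$ to any Hamiltonian path of $G$; (4) conclude $G'$ has $L(2,1)$-span $N-1$ iff $G$ is Hamiltonian-traceable, which is the desired W[1]-hard reduction since $\cw(G')$ is bounded and the construction is polynomial-time (indeed FPT-time). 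The main obstacle I anticipate is step~(3)'s forward direction: ruling out Hamiltonian paths of $G'$ in which $z$ sits \emph{internally} so that its removal splits $V$ into two paths — this is precisely why the $\{w_1,w_2\}$ gadget is needed (it forces a fixed pendant end and pins $z$), and I would need to argue carefully that \emph{any} span-$(N-1)$ labeling order must place $w_1,w_2$ consecutively at an extreme position with $z$ immediately adjacent, after which the remaining $n$ vertices are forced to form a single $G$-path. A secondary, more routine obstacle is pinning down the exact additive constant in $\cw(G')\le\cw(G)+O(1)$; since we only need boundedness this is not serious, but it should be stated with the right citation.
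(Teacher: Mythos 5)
There is a genuine error in the core correspondence, and it inverts your reduction. For $L(2,1)$ on a diameter-$2$ graph, \emph{adjacent} vertices must receive labels differing by at least $p=2$, while vertices at distance exactly $2$ (non-adjacent) need only differ by $q=1$. In the notation of Corollary~\ref{cor:mw}, the span of the labeling induced by an order $\pi$ is $(n-1)p+(q-p)|B_\pi| = 2(n-1)-|B_\pi|$ where $B_\pi$ is the set of consecutive \emph{non}-adjacent pairs; your formula $n-1+|B_\pi|$ has the roles of $p$ and $q$ swapped. Consequently, \emph{minimizing} the span means \emph{maximizing} the number of consecutive non-adjacent pairs: the span $N-1$ is achieved exactly when every consecutive pair is non-adjacent in $G'$, i.e.\ exactly when $\overline{G'}$ (not $G'$) has a Hamiltonian path. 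Since you build $G'$ by adding a universal vertex and a small gadget to $G$ itself, the minimum span of $G'$ certifies a Hamiltonian path in (essentially) $\overline{G}$, which is not the property you are reducing from. This is precisely why the Griggs--Yeh reduction used in the paper takes the \emph{complement} $\overline{G}$ and then adds the universal vertex: $G$ has a Hamiltonian path iff the $L(2,1)$-span of $\overline{G}$ plus a universal vertex is small, and the clique-width bookkeeping goes through because $\cw(\overline{G})\le 2\cw(G)$ and a universal vertex adds at most $1$.

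The fix is small --- complement $G$ before adding the universal vertex, and drop or rethink the $\{w_1,w_2\}$ gadget, whose analysis (pinning $z$ to an endpoint of a Hamiltonian path of $G'$) is built on the inverted correspondence and is not needed in the complemented version. Your overall strategy (reduce from \textsc{Hamiltonian Path} parameterized by clique-width, force diameter $2$ with a universal vertex, bound $\cw(G')$ by a function of $\cw(G)$) is the same as the paper's; the paper simply cites the Griggs--Yeh construction and verifies the clique-width bound. But as written, the equivalence in your step (4) is false, so the reduction does not establish the theorem without the complementation step.
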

\begin{proof}
In \cite{griggs1992labelling}, Griggs and Yeh give a reduction from \textsc{Hamiltonian Path} to \textsc{$L(2,1)$-Labeling} on graphs with diameter 2. Given a graph $G=(V,E)$ as an instance of \textsc{Hamiltonian Path}, the reduced graph of \textsc{$L(2,1)$-Labeling} is constructed by taking the complementary graph $\overline{G}$ of $G$ and adding a universal vertex $x$ that is adjacent to all the vertices in $V$.
Since $\cw(\overline{G})\le 2\cw(G)$ holds for any graph $G$ \cite{Courcelle2000} and adding a universal vertex $x$ increases the clique-width of $\overline{G}$ by at most 1, the clique-width of the reduced graph in \cite{griggs1992labelling} is at most $2\cw(G)+1$. This completes the proof.
\end{proof}
Note that \textsc{$L(1,1)$-Labeling} on graphs with diameter 2 is trivially solvable because the graph power $G^2$ of a graph of diameter 2 is a complete graph.

The fixed-parameter tractability of \textsc{$L(p,q)$-Labeling} for modular-width remains open in general.
On the other hand, we show that \textsc{$L(1,1)$-Labeling} and even \textsc{$L(\mathbf{1})$-Labeling} on general graphs are fixed-parameter tractable by modular-width in contrast to \textsc{$L(p,q)$-Labeling}. 

\begin{theorem}\label{thm:L11:mw}
\textsc{$L(\mathbf{1})$-Labeling} on general graphs is fixed-parameter tractable for modular-width.
\end{theorem}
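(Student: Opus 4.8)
The plan is to exploit the recursive structure of a modular decomposition. Recall that $\textsc{$L(\mathbf{1})$-Labeling}$ on $G$ is exactly the ordinary \textsc{Coloring} (equivalently, \textsc{Clique Cover} via complementation does not directly apply, so we stay with chromatic number) of the power graph $G^k$ where $k$ is the dimension of $\mathbf 1$; here we really just need the chromatic number $\chi(G^k)$, since $L(\mathbf 1)$-labelings with span $s$ are in bijection with proper $(s+1)$-colorings of $G^k$. So the task reduces to computing $\chi(G^k)$ in FPT time parameterized by $\mw(G)$. First I would fix a modular decomposition tree of $G$ computed in polynomial time \cite{tedder2008simpler}, and process it bottom-up. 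The key observation is that if $G$ is obtained from modules $G[V_1],\dots,G[V_t]$ with $t\le \mw(G)$ arranged according to a quotient graph $Q$ on $t$ vertices, then for any two vertices $u\in V_i$, $v\in V_j$ with $i\ne j$, the distance $\dist_G(u,v)$ depends only on $\dist_Q(i,j)$ (it is $1$ if $i,j$ adjacent in $Q$, and otherwise determined by the shortest $i$–$j$ path in $Q$, capped appropriately), while for $u,v$ in the same $V_i$ the distance is $\min\{\dist_{G[V_i]}(u,v),\ 2\cdot(\text{something from }Q)\}$ when $V_i$'s module has a neighbor. Consequently $G^k$ again has a "module-like" block structure: each $V_i$ becomes a clique in $G^k$ whenever $i$ has a neighbor in $Q$ within distance $k-1$ appropriately, and cross-module adjacency in $G^k$ is again governed entirely by $Q$.

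The main step is a clean dynamic-programming recurrence. For each node of the decomposition tree representing a graph $H=G[V_i]$, I would compute not a single number but a small table: the minimum number of colors needed for $H^k$, and — crucially, because a module may be "pulled closer" by its external neighbors — also the minimum number of colors needed for $H^{k'}$ for each $k'\in\{1,\dots,k\}$, and possibly for the variant where we additionally must make $H$ itself a clique (because external vertices at distance $\le k$ from all of $H$ force all of $H$ into distinct colors). That gives a table of size $O(k)=O(1)$ per node. At an internal node with quotient $Q$ on $t\le\mw(G)$ vertices, I combine the children's tables: I enumerate, for each $i$, which "effective power" $k_i$ applies to $V_i$ (determined by $Q$, in fact not enumerated but computed), obtain the required palette size $c_i$ for $V_i$ from its table, and then the colorings of the different $V_i$ that are joined in $G^k$ must use disjoint palettes while non-joined ones may share; this is itself a (weighted) coloring/partition problem on the $t$-vertex graph $Q^{(k)}$ (the graph recording which pairs $i,j$ are completely joined in $G^k$), solvable by brute force in $2^{O(t)}\cdot\mathrm{poly}$ time since $t\le\mw(G)$. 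Summing over the tree gives $f(\mw(G))\cdot n^{O(1)}$ total.

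The hard part will be getting the bookkeeping of "effective powers" exactly right: a module $V_i$ that is an independent set in $G[V_i]$ but whose vertices are pairwise at distance $2$ in $G$ (through some neighboring module) behaves in $G^k$ like a clique, and more subtly, two vertices inside $V_i$ at true distance $d$ in $G[V_i]$ may be at distance $\min\{d, 2\}$ in $G$ if $V_i$ has any external neighbor, so the "power" seen inside $V_i$ collapses. I would handle this by proving a structural lemma: for every internal node, $\dist_G(u,v)$ for $u,v\in V_i$ equals $\min\{\dist_{G[V_i]}(u,v),\ r_i\}$ where $r_i\in\{2,3,\dots,k,\infty\}$ is a quantity depending only on $Q$ and on which modules are nonempty, and $\dist_G(u,v)$ for $u\in V_i,v\in V_j$, $i\ne j$, equals $\min\{\dist_Q(i,j),k\}$ — after arguing $\dist_Q(i,j)\le 2$ whenever $Q$ is connected, which it is when $G$ is connected, so in fact cross-module distances are always $1$ or $2$. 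With that lemma in hand the recurrence and its correctness proof are routine, and the running time bound is immediate from $t\le\mw(G)\le\mw(G)$ at every node and the $2^{O(t)}$ brute force for combining palettes. I should also note the base case $|V|\le\mw(G)$ is trivial since then $\chi(G^k)\le|V|\le\mw(G)$ is computed by brute force.
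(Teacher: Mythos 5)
Your overall strategy (reduce to coloring $G^k$ and exploit the module structure) is sound, but two things go wrong. First, the structural lemma you propose is false as stated: for a connected graph the quotient $Q$ of the top-level modular partition need not have diameter at most $2$ (take any prime graph such as a long induced path as quotient), so cross-module distances are \emph{not} always $1$ or $2$; the correct statement is $\dist_G(u,v)=\dist_Q(i,j)$ for $u\in V_i$, $v\in V_j$, $i\neq j$, which is still fine for an FPT algorithm since $Q$ has at most $\mw(G)$ vertices, but your lemma as written would have to be repaired. Second, and more importantly, you miss the observation that makes all of your ``effective power'' bookkeeping and the entire recursion unnecessary: if $G$ is connected and has at least two top-level modules, then any two vertices in the same module are at distance at most $2$ in $G$ (via a common external neighbor), so for $k\geq 2$ \emph{every top-level module is already a clique in $G^k$}, and two modules are either completely joined or completely non-adjacent in $G^k$ according to whether $\dist_Q(i,j)\leq k$. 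Hence $G^k$ has neighborhood diversity at most $\mw(G)$, your dynamic program never needs to recurse below the root, and the children's tables of ``powers $k'$'' are vacuous.

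This collapsed version is exactly the paper's proof: it shows $\nd(G^2)\leq\mw(G)$ for connected $G$ (Proposition~\ref{prop:mw-nd}), combines this with $\nd(G^2)\geq\nd(G^k)$, notes that \textsc{$L(\mathbf{1})$-Labeling} on $G$ is \textsc{Coloring} on $G^k$, and invokes the known FPT algorithm for \textsc{Coloring} parameterized by neighborhood diversity. Your palette-combination step at the root (disjoint palettes for joined modules, shared palettes otherwise) is essentially re-deriving that algorithm by hand; note also that ``brute force in $2^{O(t)}$'' is not obviously enough for the resulting multicoloring problem, whereas the cited result handles it via an ILP in $f(t)$ variables. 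So the proposal can be completed, but only after correcting the distance lemma and realizing that the recursive tables are not needed; as written it leaves the part you yourself flag as ``the hard part'' resting on an incorrect claim.
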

\begin{proof}
As mentioned in \cite{fiala2018parameterized}, $\nd(G)\ge \nd(G^k)$ holds for any graph $G$ and any positive integer $k\ge 1$. By Proposition \ref{prop:mw-nd}, we have $\mw(G)\ge \nd(G^2)\ge \nd(G^k)$ for any positive integer $k\ge 2$. Also, \textsc{$L(\mathbf{1})$-Labeling} on $G$ is equivalent to \textsc{Coloring} on $G^k$. We know that \textsc{Coloring} is fixed-parameter tractable for neighborhood diversity \cite{Lampis2012:nd}.
Solving \textsc{Coloring} on $G^k$, one can compute \textsc{$L(\mathbf{1})$-Labeling}  in  $f(\mw(G))n^{O(1)}$ time.
\end{proof}

As the corollary of Theorem \ref{thm:L11:mw}, we obtain an FPT-approximation algorithm for \textsc{$L(\bp)$-Labeling} with respect to modular-width.
\begin{corollary}\label{thm:Lpq:mw}
There is a $p_{\max}$-approximation fixed-parameter algorithm for \textsc{$L(\bp)$-Labeling} on general graphs with respect to modular-width.
\end{corollary}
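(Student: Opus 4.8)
The plan is to combine the FPT algorithm of Theorem~\ref{thm:L11:mw} for \textsc{$L(\mathbf{1})$-Labeling} with the standard observation that an optimal $L(\mathbf{1})$-labeling is a $p_{\max}$-approximation for \textsc{$L(\bp)$-Labeling}. First I would recall the reduction from an $L(\mathbf{1})$-labeling to an $L(\bp)$-labeling: given $G$ with $\mathrm{diam}(G)\le k$ (which we may assume, since otherwise the relevant distances $p_d$ with $d>k$ never constrain anything and one can pad $\bp$ with copies of $p_{\max}$ up to dimension $\mathrm{diam}(G)$, or simply note the statement is about the same problem), take an optimal $L(\mathbf{1})$-labeling $l_0$ of $G$ with span $\lambda_{\mathbf 1}(G)$ and output the labeling $l(v) = p_{\max}\cdot l_0(v)$. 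For any two distinct $u,v$ at distance $d$, $l_0$ assigns them distinct integers, so $|l(u)-l(v)| = p_{\max}|l_0(u)-l_0(v)| \ge p_{\max} \ge p_d$; hence $l$ is a valid $L(\bp)$-labeling, and its span is $p_{\max}\cdot\lambda_{\mathbf 1}(G)$.

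Next I would establish the lower bound $\lambda_{\mathbf 1}(G)\le \lambda_{\bp}(G) + 1$, or more directly, $\lambda_{\mathbf 1}(G) \le \lambda_{\bp}(G)/p_{\min} + 1$ suffices but the cleaner bound is this: take an optimal $L(\bp)$-labeling $l^\ast$ of span $\lambda_{\bp}(G)$. Since $p_{\min}\ge 1$ (we may assume $\bp$ is a positive vector; for zero entries the corresponding distance imposes no constraint and the argument restricts to the relevant coordinates), any two vertices at any distance receive labels differing by at least $p_d\ge 1$, so $l^\ast$ is already an $L(\mathbf 1)$-labeling, giving $\lambda_{\mathbf 1}(G)\le \lambda_{\bp}(G)$. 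Combining, the output span is $p_{\max}\lambda_{\mathbf 1}(G)\le p_{\max}\lambda_{\bp}(G)$, which is the claimed $p_{\max}$-approximation. Finally, the running time is that of computing $G^k$ (polynomial in $n$) and running the \textsc{Coloring} FPT algorithm for neighborhood diversity on $G^k$, which by Theorem~\ref{thm:L11:mw} is $f(\mw(G))n^{O(1)}$; converting the coloring to the scaled labeling costs only $O(n)$ extra time.

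The only genuinely delicate point — and the one I would state carefully — is the handling of degenerate $\bp$: entries equal to $0$ (so $p_{\min}$ could be $0$, making $p_{\max}/p_{\min}$ undefined) and the relation between $k=\dim(\bp)$ and $\mathrm{diam}(G)$. The clean resolution is to observe that coordinates $p_d = 0$ contribute no constraint, so \textsc{$L(\bp)$-Labeling} is identical to \textsc{$L(\bp')$-Labeling} where $\bp'$ keeps only the positive entries in their original positions; and distances exceeding $\dim(\bp)$ or the largest positive index are likewise unconstrained, so we may truncate. After this normalization $p_{\min}\ge 1$ and the two-line sandwich argument above goes through verbatim. Everything else is routine: no TSP machinery, no diameter bound, and no modular-width decomposition is needed beyond invoking Theorem~\ref{thm:L11:mw} as a black box.
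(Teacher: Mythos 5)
Your proposal is correct and follows essentially the same route as the paper: multiply an optimal $L(\mathbf{1})$-labeling (computed via Theorem~\ref{thm:L11:mw}) by $p_{\max}$ to obtain a feasible $L(\bp)$-labeling of span $p_{\max}\lambda_{\mathbf{1}}\le p_{\max}\lambda_{\bp}$. You merely spell out the two inequalities (feasibility of the scaled labeling and $\lambda_{\mathbf{1}}\le\lambda_{\bp}$) that the paper compresses into the chain $\lambda_{\bp}\le\lambda_{p_{\max}\mathbf{1}}\le p_{\max}\lambda_{\mathbf{1}}$, and you additionally flag the degenerate zero-entry case, which the paper leaves implicit.
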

\begin{proof}
For any constant $c$, $\lambda_{c\bp} = c\lambda_{\bp}$ holds. Thus, we have $\lambda_{\bp}\le \lambda_{p_{\max}\mathbf{1}}\le p_{\max}\lambda_{\mathbf{1}}$. By Theorem \ref{thm:L11:mw}, we obtain a $p_{\max}$-approximation fixed-parameter algorithm by modular-width.
\end{proof}



\bibliographystyle{IEEEtranS}
\bibliography{ref}

\begin{thebibliography}{10}
\providecommand{\url}[1]{#1}
\csname url@samestyle\endcsname
\providecommand{\newblock}{\relax}
\providecommand{\bibinfo}[2]{#2}
\providecommand{\BIBentrySTDinterwordspacing}{\spaceskip=0pt\relax}
\providecommand{\BIBentryALTinterwordstretchfactor}{4}
\providecommand{\BIBentryALTinterwordspacing}{\spaceskip=\fontdimen2\font plus
\BIBentryALTinterwordstretchfactor\fontdimen3\font minus
  \fontdimen4\font\relax}
\providecommand{\BIBforeignlanguage}[2]{{%
\expandafter\ifx\csname l@#1\endcsname\relax
\typeout{** WARNING: IEEEtranS.bst: No hyphenation pattern has been}%
\typeout{** loaded for the language `#1'. Using the pattern for}%
\typeout{** the default language instead.}%
\else
\language=\csname l@#1\endcsname
\fi
#2}}
\providecommand{\BIBdecl}{\relax}
\BIBdecl

\bibitem{Concorde}
``Concorde {TSP} solver,''
  \url{http://www.math.uwaterloo.ca/tsp/concorde.html}, accessed: 2023-02-11.

\bibitem{LKH}
``{LKH},'' \url{http://webhotel4.ruc.dk/~keld/research/LKH/}, accessed:
  2023-02-11.

\bibitem{applegate2003chained}
D.~Applegate, W.~Cook, and A.~Rohe, ``Chained lin-kernighan for large traveling
  salesman problems,'' \emph{INFORMS Journal on Computing}, vol.~15, no.~1, pp.
  82--92, 2003.

\bibitem{Bellman1962:TSP:DP}
\BIBentryALTinterwordspacing
R.~Bellman, ``Dynamic programming treatment of the travelling salesman
  problem,'' \emph{J. ACM}, vol.~9, no.~1, p. 61–63, jan 1962. [Online].
  Available: \url{https://doi.org/10.1145/321105.321111}
\BIBentrySTDinterwordspacing

\bibitem{bertossi2007approximate}
A.~A. Bertossi and C.~M. Pinotti, ``Approximate l ($\delta$1, $\delta$2,…,
  $\delta$t)-coloring of trees and interval graphs,'' \emph{Networks: An
  International Journal}, vol.~49, no.~3, pp. 204--216, 2007.

\bibitem{BKTL04}
H.~L. Bodlaender, T.~Kloks, R.~B. Tan, and J.~Van~Leeuwen, ``Approximations for
  $\lambda$-colorings of graphs,'' \emph{The Computer Journal}, vol.~47, no.~2,
  pp. 193--204, 2004.

\bibitem{Calamoneri2011:survey}
\BIBentryALTinterwordspacing
T.~Calamoneri, ``The ${L}(h,k)$-labelling problem: An updated survey and
  annotated bibliography,'' \emph{Comput. J.}, vol.~54, no.~8, pp. 1344--1371,
  2011. [Online]. Available: \url{https://doi.org/10.1093/comjnl/bxr037}
\BIBentrySTDinterwordspacing

\bibitem{Chang1996}
G.~J. Chang and D.~Kuo, ``The ${L}(2,1)$-labeling problem on graphs,''
  \emph{SIAM Journal on Discrete Mathematics}, vol.~9, no.~2, pp. 309--316,
  1996.

\bibitem{Christofides1976:TSP}
N.~Christofides, ``Worst-case analysis of a new heuristic for the travelling
  salesman problem,'' Carnegie-Mellon Univ Pittsburgh Pa Management Sciences
  Research Group, Tech. Rep., 1976.

\bibitem{Courcelle2000}
B.~Courcelle and S.~Olariu, ``Upper bounds to the clique width of graphs,''
  \emph{Discrete Applied Mathematics}, vol. 101, no.~1, pp. 77--114, 2000.

\bibitem{cygan2011channel}
M.~Cygan and {\L}.~Kowalik, ``Channel assignment via fast zeta transform,''
  \emph{Information Processing Letters}, vol. 111, no.~15, pp. 727--730, 2011.

\bibitem{fiala2018parameterized}
J.~Fiala, T.~Gaven{\v{c}}iak, D.~Knop, M.~Kouteck{\`y}, and J.~Kratochv{\'\i}l,
  ``Parameterized complexity of distance labeling and uniform channel
  assignment problems,'' \emph{Discrete Applied Mathematics}, vol. 248, pp.
  46--55, 2018.

\bibitem{fiala2005distance}
J.~Fiala, P.~A. Golovach, and J.~Kratochv{\'\i}l, ``Distance constrained
  labelings of graphs of bounded treewidth,'' in \emph{International Colloquium
  on Automata, Languages, and Programming}.\hskip 1em plus 0.5em minus
  0.4em\relax Springer, 2005, pp. 360--372.

\bibitem{Fiala2011}
\BIBentryALTinterwordspacing
J.~Fiala, P.~A. Golovach, and J.~Kratochv{\'{\i}}l, ``Parameterized complexity
  of coloring problems: Treewidth versus vertex cover,'' \emph{Theor. Comput.
  Sci.}, vol. 412, no.~23, pp. 2513--2523, 2011. [Online]. Available:
  \url{https://doi.org/10.1016/j.tcs.2010.10.043}
\BIBentrySTDinterwordspacing

\bibitem{Fomin2010:cw}
F.~V. Fomin, P.~A. Golovach, D.~Lokshtanov, and S.~Saurabh, ``Intractability of
  clique-width parameterizations,'' \emph{SIAM Journal on Computing}, vol.~39,
  no.~5, pp. 1941--1956, 2010.

\bibitem{gajarsky2013parameterized}
J.~Gajarsk{\`y}, M.~Lampis, and S.~Ordyniak, ``Parameterized algorithms for
  modular-width,'' in \emph{International Symposium on Parameterized and Exact
  Computation}.\hskip 1em plus 0.5em minus 0.4em\relax Springer, 2013, pp.
  163--176.

\bibitem{griggs1992labelling}
J.~R. Griggs and R.~K. Yeh, ``Labelling graphs with a condition at distance
  2,'' \emph{SIAM Journal on Discrete Mathematics}, vol.~5, no.~4, pp.
  586--595, 1992.

\bibitem{H80}
W.~K. Hale, ``Frequency assignment: Theory and applications,''
  \emph{Proceedings of the IEEE}, vol.~68, no.~12, pp. 1497--1514, 1980.

\bibitem{halldorsson2006approximating}
M.~M. Halld{\'o}rsson, ``Approximating the ${L}(h, k)$-labelling problem,''
  \emph{International Journal of Mobile Network Design and Innovation}, vol.~1,
  no.~2, pp. 113--117, 2006.

\bibitem{Hanaka2022}
T.~{Hanaka}, K.~{Kawai}, and H.~{Ono}, ``Computing ${L}(p,1)$-labeling with
  combined parameters,'' \emph{Journal of Graph Algorithms and Applications},
  vol.~26, no.~2, pp. 241--255, 2022.

\bibitem{Hasunuma2013:tree}
\BIBentryALTinterwordspacing
T.~Hasunuma, T.~Ishii, H.~Ono, and Y.~Uno, ``A linear time algorithm for
  ${L}(2,1)$-labeling of trees,'' \emph{Algorithmica}, vol.~66, no.~3, pp.
  654--681, 2013. [Online]. Available:
  \url{https://doi.org/10.1007/s00453-012-9657-z}
\BIBentrySTDinterwordspacing

\bibitem{Hasunuma2014:survey}
------, ``Algorithmic aspects of distance constrained labeling: a survey,''
  \emph{International Journal of Networking and Computing}, vol.~4, no.~2, pp.
  251--259, 2014.

\bibitem{held1962dynamic}
M.~Held and R.~M. Karp, ``A dynamic programming approach to sequencing
  problems,'' \emph{Journal of the Society for Industrial and Applied
  mathematics}, vol.~10, no.~1, pp. 196--210, 1962.

\bibitem{helsgaun2000effective}
K.~Helsgaun, ``An effective implementation of the lin--kernighan traveling
  salesman heuristic,'' \emph{European journal of operational research}, vol.
  126, no.~1, pp. 106--130, 2000.

\bibitem{junosza2013fast}
\BIBentryALTinterwordspacing
K.~Junosza{-}Szaniawski, J.~Kratochv{\'{\i}}l, M.~Liedloff, P.~Rossmanith, and
  P.~Rzazewski, ``Fast exact algorithm for {$L(2, 1)$}-labeling of graphs,''
  \emph{Theor. Comput. Sci.}, vol. 505, pp. 42--54, 2013. [Online]. Available:
  \url{https://doi.org/10.1016/j.tcs.2012.06.037}
\BIBentrySTDinterwordspacing

\bibitem{junosza2013determining}
K.~Junosza{-}Szaniawski, J.~Kratochv{\'{\i}}l, M.~Liedloff, and P.~Rzazewski,
  ``Determining the {$L(2,1)$}-span in polynomial space,'' \emph{Discret. Appl.
  Math.}, vol. 161, no. 13-14, pp. 2052--2061, 2013.

\bibitem{Karlin2021STOCTSP}
A.~R. Karlin, N.~Klein, and S.~O. Gharan, ``A (slightly) improved approximation
  algorithm for metric tsp,'' in \emph{Proceedings of the 53rd Annual ACM
  SIGACT Symposium on Theory of Computing}, ser. STOC 2021.\hskip 1em plus
  0.5em minus 0.4em\relax New York, NY, USA: Association for Computing
  Machinery, 2021, p. 32–45.

\bibitem{Lampis2012:nd}
M.~Lampis, ``Algorithmic meta-theorems for restrictions of treewidth,''
  \emph{Algorithmica}, vol.~64, no.~1, pp. 19--37, 2012.

\bibitem{lin1973effective}
S.~Lin and B.~W. Kernighan, ``An effective heuristic algorithm for the
  traveling-salesman problem,'' \emph{Operations research}, vol.~21, no.~2, pp.
  498--516, 1973.

\bibitem{R91}
F.~S. Roberts, ``${T}$-colorings of graphs: recent results and open problems,''
  \emph{Discrete Mathematics}, vol.~93, no.~2, pp. 229--245, 1991.

\bibitem{SUCHAN20071885}
K.~Suchan and I.~Todinca, ``On powers of graphs of bounded {NLC}-width
  (clique-width),'' \emph{Discrete Applied Mathematics}, vol. 155, no.~14, pp.
  1885--1893, 2007.

\bibitem{tedder2008simpler}
M.~Tedder, D.~Corneil, M.~Habib, and C.~Paul, ``Simpler linear-time modular
  decomposition via recursive factorizing permutations,'' in
  \emph{International Colloquium on Automata, Languages, and
  Programming}.\hskip 1em plus 0.5em minus 0.4em\relax Springer, 2008, pp.
  634--645.

\bibitem{Tinos2018}
R.~Tin{\'o}s, K.~Helsgaun, and D.~Whitley, ``Efficient recombination in the
  lin-kernighan-helsgaun traveling salesman heuristic,'' in \emph{Parallel
  Problem Solving from Nature -- PPSN XV}, A.~Auger, C.~M. Fonseca,
  N.~Louren{\c{c}}o, P.~Machado, L.~Paquete, and D.~Whitley, Eds.\hskip 1em
  plus 0.5em minus 0.4em\relax Cham: Springer International Publishing, 2018,
  pp. 95--107.

\bibitem{Traub2022SICOMP}
V.~Traub, J.~Vygen, and R.~Zenklusen, ``Reducing path {TSP} to {TSP},''
  \emph{SIAM Journal on Computing}, vol.~51, no.~3, pp. STOC20--24--STOC20--53,
  2022.

\bibitem{Woeginger2003}
G.~J. Woeginger, \emph{Exact Algorithms for NP-Hard Problems: A Survey}.\hskip
  1em plus 0.5em minus 0.4em\relax Berlin, Heidelberg: Springer Berlin
  Heidelberg, 2003, pp. 185--207.

\bibitem{zenklusen20191}
R.~Zenklusen, ``A 1.5-approximation for path {TSP},'' in \emph{Proceedings of
  the thirtieth annual ACM-SIAM symposium on discrete algorithms}.\hskip 1em
  plus 0.5em minus 0.4em\relax SIAM, 2019, pp. 1539--1549.

\bibitem{TAILLARD2019420}
Éric D.~Taillard and K.~Helsgaun, ``{POPMUSIC} for the travelling salesman
  problem,'' \emph{European Journal of Operational Research}, vol. 272, no.~2,
  pp. 420--429, 2019.

\end{thebibliography}
\end{document}